\documentclass[11pt]{amsart}
\setlength{\textwidth}{6in}
\setlength{\textheight}{9.2in}
\setlength{\evensidemargin}{0in}
\setlength{\oddsidemargin}{0in}
\setlength{\footskip}{.5in}
\setlength{\topmargin}{-0.5in}
\setlength{\parskip}{6pt plus 2pt minus 1pt}
\usepackage{amsmath}

\newcommand{\PP}{\mathbb{P}}
\newcommand{\remove}[1]{}
\newtheorem{lem}{Lemma}[section]
\newtheorem{cor}[lem]{Corollary}
\newtheorem{prop}[lem]{Proposition}
\newtheorem{thm}[lem]{Theorem}

\begin{document}
\title{Tree split probabilities determine the branch lengths}

\author{Benny Chor \and Mike Steel}
\keywords{phylogenetic tree reconstruction, evolutionary model, Markov process, Hadamard transform, systems of polynomial equations, inverse function theorem.\newline 
Benny Chor: School of Computer Science, Tel-Aviv University, Tel-Aviv, Israel. {\tt benny@cs.tau.ac.il} \\
Mike Steel: Mathematics and Statistics Dept., University of Canterbury, Christchurch, New Zealand. {\tt mike.steel@canterbury.ac.nz}}

\maketitle
\begin{abstract}
The evolution of aligned DNA sequence sites is generally modeled by a Markov process operating along
the edges of a phylogenetic tree.  It is well known that the probability distribution on the site patterns at the tips of the tree determines the tree and its branch lengths.
However, the number of patterns is typically much larger than the number of edges, suggesting considerable redundancy in the branch length estimation.  In this paper we ask whether the probabilities of just the `edge-specific' patterns (the ones that correspond to a change of state on a single edge) suffice to recover the branch lengths of the tree, under
a symmetric 2-state Markov process. 
  We first show that this holds provided
the branch lengths are sufficiently short, by applying the inverse function theorem. We then consider whether this restriction to short branch lengths
is necessary, and show that for trees with up to four leaves it can be lifted. This leaves open the interesting question of whether this holds in general.
\end{abstract}

\section{Background}

When a discrete character evolves on a tree under a Markov process, the probability distribution on site
patterns at the leaves of the tree is determined by  the tree and its branch lengths \cite{fels, sem}.  What is less obvious is that this process is invertible for many models -- that is, 
the probability distribution on site
patterns at the leaves uniquely identifies both the tree and its branch lengths.  

This fundamental property underlies
all statistical approaches for inferring evolutionary relationships from aligned genetic sequence data.   In this setting, the `discrete character' refers to the pattern of nucleotides across the species
at each site, and the frequency of this pattern across the sequences provides some estimate of the probability of that pattern. In this paper we are interested in
what the probability distribution says about the branch lengths of the underlying tree (we will assume this topology is known).
Notice that 
the number of site patterns
grows exponentially with the number $n$ of leaves, yet the number of branches of the tree (for which the branch lengths are being
estimated) grows linearly with $n$.   

This suggests a basic question --- do we need all the site pattern probabilities to infer the branch lengths? More precisely,  if a tree has $k$ edges (branches), are there $k$ site patterns whose probabilities under the model might identify the lengths of these branches?  

One motivation for this question is that in practice many site patterns will simply never occur (indeed most will not, if our sequence length grows
at most polynomially with $n$, since the number of site patterns grow exponentially with $n$).  And this is a problem if we try to estimate pattern probabilities from their relative frequency.

There is an natural candidate for  a particular choice of $k$ site pattens --  for each edge we take the site pattern in which all the  leaves on one side of the edge are in one state, and all the
leaves on the other side of the edge are in a different state -- we refer to such a site pattern as a {\em tree split} for this edge.   From a practical perspective, the tree splits are patterns that are likely 
to be observed in the data, since they require just one change of state in the tree.  They also correspond to the primary divisions of the species into two groups (e.g. vertebrates vs invertebrates) and so have a clear phylogenetic meaning.

The question of whether the tree split probabilities determine the branch lengths is a delicate one - we prove that for the 2-state symmetric model, 
answer is yes for 4-leaf binary trees and for $n$-leaf star trees, and we conjecture that it holds true for arbitrary phylogenetic trees.  Our approach exploits the Hadamard representation for the 2-state model \cite{hen, HP93}, as well as computational (symbolic)
algebraic analysis tools.

\section{Model and Notations}
In the Neyman 2-state model~\cite{neyman1971}, each
character admits one out of two states, e.g., purines and pyrimidines. 
Without loss of generality, we denote these
states by 0 and 1. We use the symmetric Poisson model,
where for each edge $e$ of the tree T, there is a corresponding
probability $p_{e}$ $(0\leq p_{e} < 1/2)$ that the character states at
the two incident vertices of $e$ differ, and this probability
is independent of the state at the initial vertex. The probability
$p_{e}$ is the probability of having an odd (1, 3, 5)
number of substitutions per site across the edge $e$. The expected
number of substitutions per site across the edge $e$ equals $q_{e} =
-\frac{1}{2}\ln(1 - 2p_{e})$. The value $q(e)$ is referred to the {\em (branch) length} (or {\em weight}) of edge $e$.
Measuring the tree edges by $q_{e}$ $(0\leq q_{e}<\infty)$, we get an additive measure on the tree, namely,  the expected
number of substitutions between each pair of leaves
(because expected values are additive).   Such a weighted phylogenetic tree is a
probabilistic model that emits any given pattern of states at its
leaves with a well defined probability.   Notice that the limits $q(e) \rightarrow 0$ and $q(e) \rightarrow  \infty$ correspond to the limits
$p(e) \rightarrow 0$ and $p(e) \rightarrow \frac{1}{2}$, respectively.

The observed
sequences at the leaves can be represented by a matrix, $\psi$, where the
number of rows equals the number of species ($m=4$ in our
case), and the number of columns equals the common
length of the sequences ($n$ in our case). For 2-state characters, it is convenient
to ``summarize" the observed data $\psi$ by a vector of observed frequencies of splits, ${\hat{\bf s}}$. This vector simply counts
how many sites share any specific pattern. Under a fully
symmetric 2-state model, the probability of a pattern is equal
to that of its complement (where all 0 and 1 are interchanged). We make the
following convention about indexing the patterns obtained in the sequences over $m=4$ species,
labelled 1, 2, 3, and 4, with the sequences $x_1,x_2,x_3,x_4\in\{0,1\}^n$: We identify a site pattern by the
subset of species {1, 2, 3} whose character at that site is
different from that of species 4.  More generally (i.e. for any value of $m$)  for every
$\alpha\subseteq\{1,\ldots,m-1\}$, an {\em $\alpha$-split pattern} is a pattern where
all taxa in the subset $\alpha$ have one character (0 or 1), and the taxa in the
complement subset have the second character (there are two such patterns). The value
${\hat{s}_\alpha}$ equals the number of times that $\alpha$-split patterns
appear in the data. For $m = 4$ there are $2^{3}=8$ possible
patterns, and the vector of observed sequence frequencies
is ${\hat{\bf
s}}=[{\hat{s}}_{\emptyset},{\hat{s}}_1,{\hat{s}}_2,{\hat{s}}_{12},{\hat{s}}_3,{\hat{s}}_{13},{\hat{s}}_{23},{\hat{s}}_{123}]$.

\section{The tree split probabilities determine the branch lengths locally}

In this section, we show that the multivariate inverse function theorem implies that branch lengths can be recovered from tree split probabilities provided the branch lengths are not
too large. 

\begin{thm} 
\label{mainthm}
Let $T$ be any phylogenetic tree, on any number of leaves, such that the degree of any internal vertex of $T$ is greater than $2$.
In the 2-state symmetric model, the probabilities of the tree splits determine the branch lengths of $T$ in some neighborhood of 
the origin. That is, there exist some $\eta>0$, such that if the branch lengths all lie within $[{0}, {\bf \eta})$,
then they can be uniquely recovered from the tree split probabilities.  
\end{thm}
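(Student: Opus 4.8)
The plan is to apply the multivariate inverse function theorem to the map $\Phi$ sending the vector of branch lengths $\mathbf{q}=(q_e)_{e\in E}$ of $T$ to the vector of tree split probabilities $(s_{\sigma_e})_{e\in E}$, where $E$ is the edge set (of size $k$) and $\sigma_e\subseteq\{1,\ldots,n-1\}$ denotes the split induced by edge $e$ in the indexing convention of Section~2. The formula below extends $\Phi$ to a real-analytic map on a full neighborhood of the origin in $\mathbb{R}^k$, so it suffices to show that its Jacobian is nonsingular at $\mathbf{q}=\mathbf{0}$; the inverse function theorem then produces a neighborhood of the origin on which $\Phi$ is a diffeomorphism, hence injective, which is exactly the claimed local recoverability. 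To compute the Jacobian I would pass to the Hadamard conjugation of \cite{hen, HP93}: writing $H$ for the $2^{n-1}\times 2^{n-1}$ Hadamard matrix with entries $H_{\alpha\beta}=(-1)^{|\alpha\cap\beta|}$, the full pattern-probability spectrum $\mathbf{s}$ and the edge-length spectrum $\mathbf{q}$ (with $q_\alpha$ the length of the edge inducing split $\alpha$, zero if there is none, and $q_\emptyset=-\sum_{\alpha\neq\emptyset}q_\alpha$) are related by $\mathbf{s}=H^{-1}\exp(H\mathbf{q})$, the exponential taken componentwise and $H^{-1}=2^{-(n-1)}H$.

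With this representation the derivatives are immediate. First I would record that, after substituting $q_\emptyset=-\sum_e q_e$, one has $(H\mathbf{q})_\beta=\sum_e\bigl[(-1)^{|\beta\cap\sigma_e|}-1\bigr]q_e$, so that $\partial(H\mathbf{q})_\beta/\partial q_e=(-1)^{|\beta\cap\sigma_e|}-1$ and, crucially, $(H\mathbf{q})_\beta=0$ for every $\beta$ when $\mathbf{q}=\mathbf{0}$. Differentiating $s_\alpha=2^{-(n-1)}\sum_\beta H_{\alpha\beta}\exp\bigl((H\mathbf{q})_\beta\bigr)$ and evaluating at the origin then gives
\[
\left.\frac{\partial s_\alpha}{\partial q_e}\right|_{\mathbf{q}=\mathbf{0}}=2^{-(n-1)}\sum_\beta(-1)^{|\alpha\cap\beta|}\bigl[(-1)^{|\beta\cap\sigma_e|}-1\bigr]=\delta_{\alpha,\sigma_e}-\delta_{\alpha,\emptyset},
\]
where the last equality uses the orthogonality of the Hadamard characters, namely $\sum_\beta(-1)^{|\gamma\cap\beta|}=2^{n-1}\delta_{\gamma,\emptyset}$, applied once with $\gamma=\alpha\triangle\sigma_e$ (since $|\alpha\cap\beta|+|\sigma_e\cap\beta|\equiv|(\alpha\triangle\sigma_e)\cap\beta|$) and once with $\gamma=\alpha$.

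Restricting the index $\alpha$ to run over the tree splits $\sigma_{e'}$, $e'\in E$, yields the $k\times k$ Jacobian of $\Phi$ at the origin with entries $\delta_{\sigma_{e'},\sigma_e}-\delta_{\sigma_{e'},\emptyset}$. Trivially $\sigma_{e'}\neq\emptyset$ for every edge, so the second term vanishes; and here the hypothesis that every internal vertex has degree greater than $2$ does the essential work, since it guarantees that distinct edges induce distinct splits (a degree-$2$ vertex would force the two edges meeting at it to induce the same split), giving $\delta_{\sigma_{e'},\sigma_e}=\delta_{e',e}$. The Jacobian is therefore the $k\times k$ identity matrix, with determinant $1\neq 0$, and the inverse function theorem supplies the required $\eta>0$. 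I expect the only genuinely delicate points to be the correct bookkeeping of the $q_\emptyset$ coordinate in the Hadamard parametrization (which is what produces the $-\delta_{\alpha,\emptyset}$ term and must not be dropped) and the verification that the degree hypothesis is precisely what makes $e\mapsto\sigma_e$ injective; once these are settled, the character-orthogonality computation collapses the Jacobian to the identity with no further effort.
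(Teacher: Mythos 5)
Your proposal is correct, but it reaches the Jacobian computation by a genuinely different route from the paper. The paper never invokes the Hadamard conjugation in this proof: it parametrizes by the change probabilities $p_i$ rather than the branch lengths $q_i$, writes $s_i = p_iQ^1_iQ^2_i+(1-p_i)(Q^1_iR^2_i+R^1_iQ^2_i)$ by conditioning on whether a substitution occurs on edge $e_i$, and then shows the off-diagonal entries of $\bigl[\partial s_i/\partial p_j\bigr]$ vanish at $\mathbf{p}=\mathbf{0}$ by a combinatorial argument that the polynomial $Q^1_iR^2_i+R^1_iQ^2_i$ has no linear term $cp_j$ (this is where the no-degree-$2$ hypothesis enters for the paper, via the observation that $e_i$ is the only edge whose split pattern has positive probability when all but one $p$-value is zero); a final step converts back from $p$ to $q$. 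You instead differentiate $\mathbf{s}=H^{-1}\exp(H\mathbf{q})$ directly in the $q$-coordinates and collapse the sum by character orthogonality, obtaining $\partial s_\alpha/\partial q_e|_{\mathbf{0}}=\delta_{\alpha,\sigma_e}-\delta_{\alpha,\emptyset}$ in closed form, with the degree hypothesis entering only through the injectivity of $e\mapsto\sigma_e$. Your calculation checks out (the bookkeeping of $q_\emptyset=-\sum_e q_e$, which produces the $-\delta_{\alpha,\emptyset}$ term, and the identity $|\alpha\cap\beta|+|\sigma_e\cap\beta|\equiv|(\alpha\triangle\sigma_e)\cap\beta| \pmod 2$ are both handled properly), and your explicit remark that the formula extends $\Phi$ analytically to a full neighborhood of the origin addresses a boundary-point technicality the paper leaves implicit. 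What your approach buys is an exact, uniform formula for the Jacobian at the origin and a transparent use of the degree hypothesis; what it costs is reliance on the full Hadamard conjugation (Proposition~\ref{Hadam}, cited from \cite{HP93}), whereas the paper's argument is self-contained and elementary, using only the independence of substitutions across edges. The one point worth stating explicitly in your write-up is the standard fact you use in passing: two distinct edges of a tree induce the same leaf bipartition only if the path between them passes exclusively through degree-$2$ vertices, so the hypothesis on internal degrees is exactly what makes $e\mapsto\sigma_e$ injective.
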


\begin{proof} 

To simplify notation in this section, given a phylogenetic tree $T$ with $N$ edges, label the edges $e_1, e_2, \ldots, e_N$.
For each $i \in \{1, \ldots, N\}$, let  $\alpha_i$ denote the tree split corresponding to $e_i$;  let $s_i$  be the probability of generating the pattern $\alpha_i$ on $T$ under the symmetric 2-state model; let 
$q_i$ be the branch length of edge $e_i$,  and let $p_i = \frac{1}{2}(1-\exp(-2q_i))$, which is the probability of a change of state on edge $e_i=(v_1^i, v_2^i)$. Consider the two subtrees of  $T$
which result from removing the edge $e_i$ (but not the nodes $v_1^i, v_2^i$). Let $T_1, T_2$ denote the resulting subtrees, rooted at $v_1^i, v_2^i$, respectively. Let $Q_i^1$  be the probability of the event
``all leaves of $T_1$ are in the same state as $v_1^i$'', and $Q_i^2$  be the probability of the event
``all leaves of $T_2$ are in the same state as $v_2^i$''. Let $R_i^1$ denote the probability of the event ``all leaves of $T_1$ are 
in the same state and they differ from the state of $v_1^i$'', and $R_i^2$ 
denote the probability of the event ``all leaves of $T_2$ are in the same state and they differ from the state of $v_2^i$''.
We note that under the 2-state symmetric model, changes of state on different edges are independent events.
By considering whether or not there is a change of state on edge $e_i$, the following identity holds for all $i$: 

\begin{equation}
\label{sq}
s_i = p_iQ^1_i Q^2_i+ (1-p_i)(Q^1_iR^2_i+R^1_iQ^2_i),
\end{equation}

Note that $Q^1_i, Q^2_i,R^1_i, R^2_i$ involve only $p_j$ values for $j \neq i$, and that when all the $p_j$ values are zero we have:

\begin{equation}
\label{eqR}
R^1_i|_{{\bf p=0}} =R^2_i|_{{\bf p=0}}=0, \mbox{ and } Q^1_i|_{{\bf p=0}} =Q^2_i|_{{\bf p = 0}}=1.
\end{equation}

Now, consider the Jacobian  matrix $${\bf J} = \left[ \frac{\partial s_i}{\partial p_j}\right].$$
From Eqn. (\ref{sq}) and the fact that $p_i$ does not appear in $Q_i^1, Q_i^2$ and $R_i^1, R_i^2$ we have:
$$\frac{\partial s_i}{\partial p_i} =  Q^1_i Q^2_i- (Q^1_iR^2_i+R^1_iQ^2_i)$$
and from Eqn. (\ref{eqR}) this equals $1$ when ${\bf p=0}$.

Similarly, for $j \neq i$, Eqn. (\ref{sq}) gives:

$$\frac{\partial s_i}{\partial p_j} = p_i \frac{\partial}{\partial p_j} Q^1_i Q^2_i + (1-p_i) \frac{\partial}{\partial p_j}(Q^1_iR^2_i+R^1_iQ^2_i)\ ,$$
and so, when ${\bf p = 0}$ the first term on the right vanishes (since $p_i = 0)$ and we have:
$$\frac{\partial s_i}{\partial p_j}|_{{\bf p=0}} =  \frac{\partial}{\partial p_j}(Q^1_iR^2_i+R^1_iQ^2_i)|_{{\bf p=0}}.$$

Now, notice that $Q^1_iR^2_i+R^1_iQ^2_i$ is a multinomial polynomial of the variables $p_k$, $k \neq i$.   We argue
that this polynomial has no term of the form $cp_j$ for a constant $c \neq 0$.  Suppose otherwise, then setting $p_j=\frac{1}{4}$ and $p_k =0$ for all
$k \neq j$, Eqn. (\ref{sq}) shows that the pattern $\alpha_i$ occurs with probability $\frac{1}{4}c \neq 0$ under such an edge probability assignment.
However, since there is no vertex of degree 2 in $T$, edge $e_i$ is the only edge for which $\alpha_i$ has positive probability when
all but one $p-$value is set to zero. This  contradicts the assumption that we are in the setting where $j \neq i$.
Since $Q^1_iR^2_i+R^1_iQ^2_i$ has no term of the form $cp_j$ ($c\neq 0$)  it follows that:   
$$ \frac{\partial}{\partial p_j}(Q^1_iR^2_i+R^1_iQ^2_i)|_{{\bf p=0}}=0.$$

Summarizing, we have $\frac{\partial s_i}{\partial p_i}|_{{\bf p=0}} = \delta_{ij}$ (Kronecker delta), and so at 
${\bf p=0}$, ${\bf J}$ is the $N \times N$ identity matrix. 
Since the map ${\bf p} \mapsto (s_1, \ldots, s_N)$ is a polynomial map, and therefore continuously differentiable, 
and since the Jacobian of this map is invertible at ${\bf p}= {\bf 0}$, the conditions for the multivariate Inverse 
Function Theorem apply at this point. 
Thus, for some neighborhood neighborhood of ${\bf p=0}$, 
the function $(p_1, \ldots, p_N) \mapsto (s_1, \ldots, s_N)$ is invertible. 
Finally, observe that the map from $[0, \infty)^N$ onto $[0, 1/2)^N$ defined by $(q_1, \ldots q_N) \mapsto (p_1, \ldots p_N)$ 
is invertible, and so the theorem  now follows. 
\end{proof}

Theorem~\ref{mainthm} begs the question: how large can $\eta$ be in order for invertibility to hold?  
 In the next section we show that we can obtain invertibility with $\eta=\infty$ for trees with at most four leaves.

\section{Exact analysis for small trees}
For phylogenetic trees with two and three leaves it is easily verified that the tree splits determine the branch lengths with no restriction required on the size of these branch lengths.
Thus we will consider trees with four leaves, of which there are two types - the resolved binary tree, and the star tree. 
We will show that the probability distribution on tree splits
determines the branch lengths of the binary tree for all strictly positive branch lengths, and then establish that the same holds for the star tree. 

 A useful tool in this analysis is the Hadamard representation of the 2-state symmetric model, which we first recall. 

\subsection{Hadamard representation}

 Given a tree T with $m$ leaves and edge
probabilities ${\bf p}=[p_e]_{e\in E(T)}(0\leq p_e< \frac{1}{2})$, the probability of
generating an $\alpha$-split pattern ($\alpha\subseteq\{1,\ldots,m-1\}$) is
determined (and is equal for all sites). Denote this probability by
${\bf s_{\alpha}}=Pr(\alpha-split\mid T,p)$.

Using the same indexing scheme as
above,when $m=4$, we define the vector of pattern generation probabilities ${\bf
s}=[s_\emptyset,s_1,s_2,s_{12},s_3,s_{13},s_{23},s_{123}]$. This vector is
termed the expected sequence spectrum in~\cite{HP93}, where the indexing scheme is explained as well.

\begin{figure}[h]
\begin{center}\unitlength=1mm
\begin{picture}(30,30)
\put(18,15){\line(1,0){14}}\put(23,12.5){$q_{12}$}
\put(-12,5){\line(3,1){30}}\put(0,11.5){$q_{4}$}
\put(-12,25){\line(3,-1){30}}\put(0,18){$q_3$}
\put(32,15){\line(1,1){5}}\put(35.5,17){$q_1$}
\put(32,15){\line(1,-1){5}}\put(35.5,12.5){$q_2$}
\put(-16,25){$x_3$}
\put(-16,3){$x_4$}
\put(39,21){$x_1$}
\put(39,8){$x_2$}
%\put(16,16){$w_2$}
%\put(28,16){$w_1$}
\end{picture}\\
\end{center}
\caption{Example of a resolved tree with four leaves}
\label{treeEx1}
\end{figure}
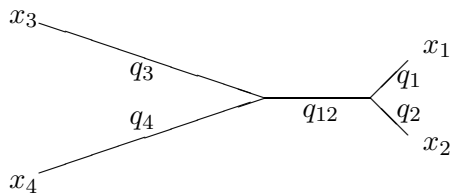
Even though in principle, the edge lengths $q_e=[q_e]_{e\in E(T)}$ determine the
vector $\bf s$ of pattern generation probabilities, it is not obvious how to
actually compute this vector, given the edge lengths. This
is where the Hadamard conjugation \cite{HP93, HPS94}
 comes in. This
transformation yields a powerful tool, which greatly simplifies and
unifies the analysis of phylogenetic data.\\

\noindent
{\bf Definition:}
A {\em Hadamard matrix} of order $\ell$ is an $\ell\times \ell$ matrix $A$ with
$\pm 1$ entries such that $A^tA=\ell I_\ell$, where $I_\ell$ is the identity $\ell\times\ell$ matrix.\\

We will use a special family of Hadamard matrices, 
whose sizes, $\ell$, are powers of 2 \cite{mac77}, defined inductively for
$m\geq 0$ by $H_0=[1]$ and 
$H_{m+1}=\left[ \begin{array}{cc}
                         H_m & H_m\\
                         H_m & -H_m
                 \end{array}\right]$.
It is convenient to index the rows and columns of $H_m$ by
lexicographically ordered subsets of
$\{1,\ldots,m\}$.  Denote by $h_{ \alpha \gamma }$ the $(\alpha,\gamma)$ entry of
$H_m$, then $h_{\alpha \gamma}=(-1)^{|\alpha \cap \gamma |}$. This implies
that $H_m$ is symmetric, namely $H_m^t=H_m$, and thus by the
definition of Hadamard matrices $H_m^{-1}=2^{-m}H_m$.\\
\medskip

\begin{prop} \cite{HP93}
\label{Hadam}
 If ${\bf p}<{\bf 1/2}$ then
${\bf s} = {\bf s(q)} = H^{-1}\exp(H{\bf q})$ where $H=H_{m-1}$, 
namely for $\alpha\subseteq\{1,\ldots$ $,m-1\}$, 
$${\bf s}_\alpha=2^{-(m-1)}\sum_{\gamma}h_{\alpha \gamma}\left(\exp\left(\sum_{\delta }  
h_{\gamma \delta}q_\delta\right)\right).$$
\end{prop}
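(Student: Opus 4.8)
The plan is to prove the equivalent statement $H\mathbf{s} = \exp(H\mathbf{q})$ (with $\exp$ taken entrywise), from which $\mathbf{s} = H^{-1}\exp(H\mathbf{q})$ follows immediately since $H^{-1} = 2^{-(m-1)}H$. The key idea is to recognize $H = H_{m-1}$ as the Walsh--Hadamard (Fourier) transform on the group $\{0,1\}^{m-1}$ under coordinatewise addition mod $2$, and to exploit the fact, already noted in the previous proof, that under the symmetric $2$-state model the state changes on distinct edges are independent.

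First I would pass from site patterns to a single binary vector. Writing $x_1,\dots,x_m \in \{0,1\}$ for the leaf states at a site, set $y_i = x_i \oplus x_m$ for $i = 1,\dots,m-1$. Global complementation of all states leaves $y = (y_1,\dots,y_{m-1})$ unchanged, so the two patterns constituting the $\alpha$-split correspond to the single event $y = \mathbf{1}_\alpha$ (the indicator vector of $\alpha$), and hence $s_\alpha = \Pr(y = \mathbf{1}_\alpha)$. For each edge $e$ let $A_e \subseteq \{1,\dots,m-1\}$ be the block of the split induced by $e$ not containing leaf $m$, and let $\chi_e \in \{0,1\}$ indicate a state change across $e$, so the $\chi_e$ are independent with $\Pr(\chi_e = 1) = p_e$. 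Since $y_i$ is the parity of the changes along the path from leaf $i$ to leaf $m$, and $e$ lies on that path exactly when $i \in A_e$, this yields the identity $y = \bigoplus_e \chi_e \mathbf{1}_{A_e}$.

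Next I would apply the transform. The $\gamma$-entry of $H\mathbf{s}$ is $\sum_\alpha h_{\gamma\alpha}s_\alpha = E[(-1)^{\langle\gamma,y\rangle}]$, since $h_{\gamma\alpha} = (-1)^{|\gamma\cap\alpha|} = (-1)^{\langle\gamma,\mathbf{1}_\alpha\rangle}$. Substituting the XOR expression for $y$ and invoking independence of the $\chi_e$ factorizes this expectation over edges: $E[(-1)^{\langle\gamma,y\rangle}] = \prod_e E[(-1)^{\chi_e|\gamma\cap A_e|}]$. A single edge contributes $1$ when $|\gamma\cap A_e|$ is even and $1-2p_e = \exp(-2q_e)$ when it is odd; since $h_{\gamma,A_e} = (-1)^{|\gamma\cap A_e|}$, both cases are captured uniformly by the factor $\exp((h_{\gamma,A_e}-1)q_e)$, giving $(H\mathbf{s})_\gamma = \exp\!\big(\sum_e (h_{\gamma,A_e}-1)q_e\big)$.

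It remains to recognize the exponent as $(H\mathbf{q})_\gamma = \sum_\delta h_{\gamma\delta}q_\delta$, where $\mathbf{q}$ is the split-indexed vector with $q_\delta = \sum_{e : A_e = \delta} q_e$ for $\delta \neq \emptyset$ and the correction term $q_\emptyset = -\sum_{\delta\neq\emptyset} q_\delta$: grouping the edge sum by splits turns $\sum_e h_{\gamma,A_e}q_e$ into $\sum_{\delta\neq\emptyset}h_{\gamma\delta}q_\delta$, while the leftover $-\sum_e q_e$ becomes $h_{\gamma\emptyset}q_\emptyset$. I expect the main obstacle to be precisely this bookkeeping: one must check that every tree edge induces a \emph{nonempty} split $A_e$ (so that $q_\emptyset$ is a genuine free correction rather than colliding with an actual edge weight), and that the stated value of $q_\emptyset$ is exactly what forces the $\gamma=\emptyset$ row to read $\sum_\alpha s_\alpha = 1$. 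Once this identification is made, $H\mathbf{s} = \exp(H\mathbf{q})$ holds entrywise, and multiplying by $H^{-1} = 2^{-(m-1)}H$ delivers the claimed formula for $s_\alpha$.
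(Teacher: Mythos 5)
The paper does not actually prove Proposition~\ref{Hadam}; it is quoted from \cite{HP93} as a known result, so there is no in-text argument to compare yours against. Judged on its own, your proof is correct and is essentially the standard character-theoretic derivation of the Hadamard conjugation: encode a site pattern by the difference vector $y_i=x_i\oplus x_m$ so that $s_\alpha=\Pr(y=\mathbf{1}_\alpha)$, observe that $y=\bigoplus_e \chi_e\mathbf{1}_{A_e}$ with the $\chi_e$ independent, and compute $(H\mathbf{s})_\gamma=E\left[(-1)^{\langle\gamma,y\rangle}\right]$ as a product over edges, each factor being $1$ or $1-2p_e=e^{-2q_e}$ according to the parity of $|\gamma\cap A_e|$. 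All the steps check out, including the two points most likely to be fumbled: the reduction to $y$ (which quietly makes the root distribution irrelevant, since $y$ is a function of the $\chi_e$ alone), and the bookkeeping of the conventional entry $q_\emptyset=-\sum_{\delta\neq\emptyset}q_\delta$, which you correctly identify as exactly the normalization forcing the $\gamma=\emptyset$ row to read $\sum_\alpha s_\alpha=1$ (and your worry about $A_e\neq\emptyset$ is resolved because each component of $T-e$ contains a leaf of $T$). This gives a clean, self-contained proof of a statement the paper only cites, and it is the same Fourier-on-$(\mathbb{Z}/2\mathbb{Z})^{m-1}$ viewpoint that underlies the Hendy--Penny original.
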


\medskip

We note that the transformation is reversible, so if $H{\bf s}>{\bf 0}$ (all entries are positive) then
${\bf q} = {\bf q(s)} = H^{-1}\ln(H{\bf s})$. Thus, when $m=4$, the eight  components in the expected sequence spectrum
uniquely determine the five edge weights of the corresponding four taxa tree. 

The question we explore is if the five components that correspond to splits in the tree, namely $s_1, s_2, s_{12}, s_3, s_{123}$ also determine the  five edge weights $q_1, q_2, q_{12}, q_3, q_{123}$. In other words, is the mapping 
$(q_1, q_2, q_{12}, q_3, q_{123}) \mapsto (s_1, s_2, s_{12}, s_3, s_{123})$ one to one?

\subsection{Resolved Tree on Four Leaves}
In this case, the tree has five edges with positive lengths, $0\leq q_{\alpha}<\infty$.
To describe the mapping $(q_1, q_2, q_{12}, q_3, q_{123}) \mapsto (s_1, s_2, s_{12}, s_3, s_{123})$,  
let us first denote $$a_{1}=e^{-2q_{1}}, a_{2}=e^{-2q_{2}}, a_{12}=e^{-2q_{12}}, a_{3}=e^{-2q_{3}}, a_{123}=e^{-2q_{123}},$$ where
the $a_{\alpha}$ are in the interval $(0,1]$ (corresponding to 
edge lengths $0\leq q_{\alpha}<\infty$). Then by the Hadamard transform (with the 8-by-8 matrix, $H_3$),
\begin{align*}
s_1&= a_{123}a_{3} - a_{1}a_{2} + a_{12}a_{2}a_{3} + a_{12}a_{123}a_{2} - a_{1}a_{12}a_{3} - a_{1}a_{12}a_{123} - a_{1}a_{123}a_{2}a_{3} + 1\\
s_2&= a_{123}a_{3} - a_{1}a_{2} - a_{12}a_{2}a_{3} - a_{12}a_{123}a_{2} + a_{1}a_{12}a_{3} + a_{1}a_{12}a_{123} - a_{1}a_{123}a_{2}a_{3} + 1\\
s_{12}&= a_{123}a_{3} + a_{1}a_{2} - a_{12}a_{2}a_{3} - a_{12}a_{123}a_{2} - a_{1}a_{12}a_{3} - a_{1}a_{12}a_{123} + a_{1}a_{123}a_{2}a_{3} + 1\\
s_3&= -a_{123}a_{3} + a_{1}a_{2} - a_{12}a_{2}a_{3} + a_{12}a_{123}a_{2} - a_{1}a_{12}a_{3} + a_{1}a_{12}a_{123} - a_{1}a_{123}a_{2}a_{3} + 1\\
s_{123}&= -a_{123}a_{3} + a_{1}a_{2} + a_{12}a_{2}a_{3} - a_{12}a_{123}a_{2} + a_{1}a_{12}a_{3} - a_{1}a_{12}a_{123} - a_{1}a_{123}a_{2}a_{3} + 1
\end{align*}

\noindent
To show that  the mapping $(q_1, q_2, q_{12}, q_3, q_{123}) \mapsto (s_1, s_2, s_{12}, s_3, s_{123})$ is indeed one to one, it suffices to 
show the following: For any assignment in the interval $(0,1]$ of the ten  variables satisfying $(a_1,a_2,a_{12},a_3,a_{123}), (a'_1,a'_2,a'_{12},a'_3,a'_{123})$, if
\begin{align*}
&s_{1}(a_1,a_2,a_{12},a_3,a_{123})=s_{1}(a'_1,a'_2,a'_{12},a'_3,a'_{123}), \\
&s_{2}(a_1,a_2,a_{12},a_3,a_{123})=s_{2}(a'_1,a'_2,a'_{12},a'_3,a'_{123}), \\
&s_{12}(a_1,a_2,a_{12},a_3,a_{123})=s_{12}(a'_1,a'_2,a'_{12},a'_3,a'_{123}), \\
&s_{3}(a_1,a_2,a_{12},a_3,a_{123})=s_{3}(a'_1,a'_2,a'_{12},a'_3,a'_{123}), \\ 
&s_{123}(a_1,a_2,a_{12},a_3,a_{123})=s_{123}(a'_1,a'_2,a'_{12},a'_3,a'_{123}),
\end{align*}
then we have $(a_1,a_2,a_{12},a_3,a_{123})=(a'_1,a'_2,a'_{12},a'_3,a'_{123})$. As the mapping from $(q_1, q_2, q_{12}, q_3, q_{123})$
 to $(a_1, a_2, a_{12}, a_3, a_{123})$  is clearly one to one, this will establish the desired uniqueness.
 
To simplify the polynomial system, we first preprocess it by computing a few linear combinations of the $s_\alpha$
(we remark that the preprocessing leads to a substantial saving subsequently).
\begin{align*}
&(s_{1}+s_{2})/2=-a_1a_{123}a_2a_3 - a_1a_2 + a_{123}a_3 + 1\\
&(s_{1}-s_{3})/2=
-a_1a_{12}a_{123} + a_{12}a_2a_3 - a_1a_2 + a_{123}a_3\\
&(s_{2}+s_{12})/2 =-a_{12}a_{123}a_2 - a_{12}a_2a_3 + a_{123}a_3 + 1\\
&(s_{1}+s_{2}+s_{3}+s_{123})/4=-a_1a_{123}a_2a_3 + 1
\end{align*}

These equalities give rise to the following set of five polynomial equations in ten variables. 
\begin{align*}
\mbox{\tt (eq1)\ \ \ } & a_{123}a_{3} - a_{1}a_{2} + a_{12}a_{2}a_{3} + a_{12}a_{123}a_{2} - a_{1}a_{12}a_{3} - a_{1}a_{12}a_{123} - a_{1}a_{123}a_{2}a_{3} \\
&= a'_{123}a'_{3} - a'_{1}a'_{2} + a'_{12}a'_{2}a'_{3} + a'_{12}a'_{123}a'_{2} - a'_{1}a'_{12}a'_{3} - a'_{1}a'_{12}a'_{123} - a'_{1}a'_{123}a'_{2}a'_{3} \\
\mbox{\tt (eq2)\ \ \ }  &-a_1a_{123}a_2a_3 - a_1a_2 + a_{123}a_3=-a'_1a'_{123}a'_2a'_3 - a'_1a'_2 + a'_{123}a'_3 \\
\mbox{\tt (eq3)\ \ \ }  &-a_1a_{12}a_{123} + a_{12}a_2a_3 - a_1a_2 + a_{123}a_3=-a'_1a'_{12}a'_{123} + a'_{12}a'_2a'_3 - a'_1a'_2 + a'_{123}a'_3 \\
\mbox{\tt (eq4)\ \ \ }  &-a_{12}a_{123}a_2 - a_{12}a_2a_3 + a_{123}a_3=-a'_{12}a'_{123}a'_2 - a'_{12}a'_2a'_3 + a'_{123}a'_3 \\
\mbox{\tt (eq5)\ \ \ }  &-a_1a_{123}a_2a_3=-a'_1a'_{123}a'_2a'_3 \\
\end{align*}
Even after the simplification, this system is too involved to solve manually. However, it is amenable to being solved using symbolic computer algebra
packages. Specifically, we employed the mathematical symbolic package Maple. We express the system as a system in 5 variables, $a'_1,a'_2,a'_{12},a'_3,a'_{123}$, so Maple tries to solve for these, viewing $a_1,a_2,a_{12},a_3,a_{123}$ as given.\\
\smallskip
\noindent
{\tt > with(SolveTools):}\\
{\tt > Sols := PolynomialSystem([eq{1}, eq{2}, eq{3}, eq4, eq5],}\\ 
\mbox{\ \ \ \ \ \ \ \ \ \ \ \ \ \ \ \ \ \ \ }{\tt \{a1',a2',a12',a3',a123'\}, domain = real):}\\
\noindent
Maple (version 17) produced 5 sets of solutions  to this system. 
The first solution,
$$\{a'_1 = a_1, a'_{12} = a_{12}, a'_{123} = a_{123}, a'_2 = a_2, a'_3 = a_3\}\ ,$$ is exactly what we are looking for. Still, we have to show that the other four solutions are illegal, namely at least one of the variables is out of the interval $(0,1]$.

The 2nd, 3rd, and 4th solutions are 
\begin{align*}
&\{a'_1 = a_1, a'_{12} = a_{12}, a'_{123} = -a_{123}, a'_2 = a_2, a'_3 = -a_3\}\ ,\\
&\{a'_1 = -a_1, a'_{12} = -a_{12}, a'_{123} = a_{123}, a'_2 = -a_2, a'_3 = a_3\}\ ,\\
&\{a'_1 = -a_1, a'_{12} = a_{12}, a'_{123} = -a_{123}, a'_2 = -a_2, a'_3 = -a_3\}\ ,
\end{align*}
respectively. It is clear that each of these solutions has at least one negative variable (given that all $a_\alpha$ are positive).
The 5th solution, however, is more involved, and it is not immediately clear if any of the individual variables is out of the interval $(0,1]$
for all choices of $a_{1},a_{2},a_{12},a_{3},a_{123}$. Yet, we show that the product of two of the resulting variables
is always negative:\\
{\tt > eval(a1'$\cdot$a2', Sols[5]);}\\
\mbox{\ \ \ \ \ \ \ \ \ \ \ \ \ \ \ \ \ \ \ \ \ \ \ \ \ \ \ \ \ \ \ \ \ \ \ \ \ \ \ \ \ \ }{\tt -a123a3}\\
Namely the product of the values assigned to the two variables, $a'_1a'_2$, equals $-a_{123}a_3$\footnote{The two components of this solution that establish the result above
 can be found in the appendix in full details.}. This
establishes the claim that only the first solution is legal.
This means that, within our domain, for given $a_{1},a_{2},a_{12},a_{3},a_{123}$,
there is only solution for the set of equations, and the solution is these $a_{1},a_{2},a_{12},a_{3},a_{123}$ 
themselves. So,
for $0\leq q_1, q_2, q_{12}, q_3, q_{123}<\infty$, the mapping 
$(q_1, q_2, q_{12}, q_3, q_{123}) \mapsto (s_1, s_2, s_{12}, s_3, s_{123})$ is indeed one to one, 
as desired. \hfill$\Box$

\subsection{The Resolved Tree: Explicit Solutions}
The result in the previous subsection is {\em existential}. It says that for any $(s_1, s_2, s_{12}, s_3, s_{123})$ in the relevant range there is a unique $(q_1, q_2, q_{12}, q_3, q_{123})$, but provides no {\em explicit} method to produce the $q_{\alpha}$. Here, we 
employ the Hadamard transform to derive quadratic invariants that enable such construction. The drawback of this approach is
that the number of possible solutions is greater than $1$, and can be up to $4$. We know that only one will be in the
relevant range, but a-priori do not know which one it is.

The Hadamard transformation ${\bf s} = {\bf s(q)} = H^{-1}\exp(H{\bf q})$ is one-to-one, so it can also be expressed as a mapping 
from {\bf s} to {\bf q}. This allows us to express each $q_{\alpha}$ in terms of the 7 values 
$(s_1, s_2, s_{12}, s_3, s_{13}, s_{23}, s_{123})$ ($s_{\emptyset}$ equals 1 minus the sum of these 7 $s_{\alpha}$). For the tree $T$ of Fig.~\ref{treeEx1}, two splits are not realized by 
any edge of $T$ and this corresponds to the identities $q_{13}=q_{23}=0$. This gives rise to two invariants involving the 7 $s_{\alpha}$. After some manipulation and simplification, we derive the following two invariants \cite{CHHP}:
\begin{align*}
0&=1-2\,{\it s_{1}}-2\,{\it s_{2}}-2\,{\it s_{3}}-2\,{\it s_{123}}\\
&- \left( -1+2\,{
\it s_{1}}+2\,{\it s_{2}}+2\,{\it s_{13}}+2\,{\it s_{23}} \right)  \left( -1+2\,{
\it s_{3}}+2\,{\it s_{13}}+2\,{\it s_{23}}+2\,{\it s_{123}} \right)
\end{align*}
\begin{align*}
0&=\left( -1+2\,{\it s_{1}}+2\,{\it s_{12}}+2\,{\it s_{13}}+2\,{\it s_{123}}
 \right)  \left( -1+2\,{\it s_{2}}+2\,{\it s_{12}}+2\,{\it s_{3}}+2\,{\it s_{13}}
 \right)\\
 & - \left( -1+2\,{\it s_{2}}+2\,{\it s_{12}}+2\,{\it s_{23}}+2\,{\it 
s_{123}} \right)  \left( -1+2\,{\it s_{1}}+2\,{\it s_{12}}+2\,{\it s_{3}}+2\,{\it 
s_{23}} \right)
\end{align*}
From the first equation, we get a quadratic equation for $x=s_{13}+s_{23}$,
\begin{align*}
0&=1-2\,{\it s_{1}}-2\,{\it s_{2}}-2\,{\it s_{3}}-2\,{\it s_{123}}\\
&- \left( -1+2\,{
\it s_{1}}+2\,{\it s_{2}}+2\,x \right)  \left( -1+2\,{
\it s_{3}}+2\,{\it s_{123}}+2\,{\it x} \right)
\end{align*}
Once we obtain a solution for $x=s_{13}+s_{23}$, the second invariant can be rewritten as 
a quadratic equation in $y=s_{13}-s_{23}$, where all coefficients are known
\begin{align*}
0=&\left( -1+2\,{  s_{1}}+2\,{  s_{12}}+2\,{  s_{123}+x+y}
 \right)\\
 &\ \ \cdot  \left( -1+2\,{  s_{2}}+2\,{  s_{12}}+2\,{  s_{3}}+x +y
 \right)\\
 &-\left( -1+2\,{  s_{2}}+2\,{  s_{12}}+2\,{  s_{123}}+{  x -y} \right)\\
&\ \ \cdot  \left( -1+2\,{  s_{1}}+2\,{  s_{12}}+2\,{  s_{3}}+{  x-y} \right)\ .
\end{align*}
In summary, we have up to two solutions for $x=s_{13}+s_{23}$, and for each of them up to 2 solutions for $y=s_{13}-s_{23}$.
Once $s_{13},s_{23}$ are determined, we find $s_{\emptyset}$ using the invariant
$$s_{\emptyset}+s_1+s_2+s_{12}+s_3+s_{13}+s_{23}+s_{123}=1\ .$$

\subsection{The Star Tree on Four Leaves}
The star tree is a special case of the resolved tree on four nodes, where the internal edge $q_{12}$ is of length 0.
However, the uniqueness result we have for the resolved tree does {\em not} directly imply a similar one for the star, as
we obtained one fewer observed pattern probability. We want to show that in this case, the mapping $(q_1, q_2, q_3, q_{123}) \mapsto (s_1, s_2, s_3, s_{123})$ is one to one. While this does not directly follow from the previous result, a similar algebraic approach
does work here as well.
\begin{align*}
s_1&= a_{123}a_{3} - a_{1}a_{2} + a_{2}a_{3} + a_{123}a_{2} - a_{1}a_{3} - a_{1}a_{123} - a_{1}a_{123}a_{2}a_{3} + 1\\
s_2&= a_{123}a_{3} - a_{1}a_{2} - a_{2}a_{3} - a_{123}a_{2} + a_{1}a_{3} + a_{1}a_{123} - a_{1}a_{123}a_{2}a_{3} + 1\\
s_3&= -a_{123}a_{3} + a_{1}a_{2} - a_{2}a_{3} + a_{123}a_{2} - a_{1}a_{3} + a_{1}a_{123} - a_{1}a_{123}a_{2}a_{3} + 1\\
s_{123}&= -a_{123}a_{3} + a_{1}a_{2} + a_{2}a_{3} - a_{123}a_{2} + a_{1}a_{3} - a_{1}a_{123} - a_{1}a_{123}a_{2}a_{3} + 1
\end{align*}

Again, we preprocess the system by computing a few linear combinations of the $s_\alpha$.
\begin{align*}
&(s_{1}+s_{2})/2=-a_1a_{123}a_2a_3 - a_1a_2 + a_{123}a_3 + 1\\
&(s_{1}+s_{3})/2=
-a_1a_{123}a_2a_3 - a_1a_3 + a_{123}a_2 + 1\\
&(s_{1}+s_{2}+s_{3}+s_{123})/4=-a_1a_{123}a_2a_3 + 1
\end{align*}
The resulting system of four polynomial equations (of degree 4 though, as before) is now
\begin{align*}
\mbox{\tt (eq1)\ \ \ } & a_{123}a_{3} - a_{1}a_{2} + a_{2}a_{3} + a_{123}a_{2} - a_{1}a_{3} - a_{1}a_{123} - a_{1}a_{123}a_{2}a_{3} \\
&= a'_{123}a'_{3} - a'_{1}a'_{2} + a'_{2}a'_{3} + a'_{123}a'_{2} - a'_{1}a'_{3} - a'_{1}a'_{123} - a'_{1}a'_{123}a'_{2}a'_{3} \\
\mbox{\tt (eq2)\ \ \ }  &-a_1a_{123}a_2a_3 - a_1a_2 + a_{123}a_3=-a'_1a'_{123}a'_2a'_3 - a'_1a'_2 + a'_{123}a'_3 \\
\mbox{\tt (eq3)\ \ \ }  &-a_1a_{123}a_2a_3 - a_1a_3 + a_{123}a_2=-a'_1a'_{123}a'_2a'_3 - a'_1a'_3 + a'_{123}a'_2 \\
\mbox{\tt (eq4)\ \ \ }  &-a_1a_{123}a_2a_3=-a'_1a'_{123}a'_2a'_3 \\
\end{align*}
\noindent
{\tt > with(SolveTools):}\\
{\tt > Sols := PolynomialSystem(\{eq{1}, eq{2}, eq{3}, eq4\},}\\ 
\mbox{\ \ \ \ \ \ \ \ \ \ \ \ \ \ \ \ \ \ \ }{\tt \{a1',a2',a3',a123'\}, domain = real):}\\
\noindent
Maple (version 17) produced 12 sets of solutions  to this system. 
Again, the first solution,
$$\{a'_1 = a_1,  a'_{123} = a_{123}, a'_2 = a_2, a'_3 = a_3\}\ ,$$ is exactly what we are looking for. For the other
sets of solutions, each either has a variable assigned to a negative value, {\em e.g.}\\
{\tt > eval(a3', Sols[2]);}\\
\mbox{\ \ \ \ \ \ \ \ \ \ \ \ \ \ \ \ \ \ \ \ \ \ \ \ \ \ \ \ \ \ \ \ \ \ \ \ \ \ \ \ \ \ }{\tt -a3}\\
or it is a root of a quadratic equation, {\em e.g.} \\
{\tt > eval(a3', Sols[6]);}\\
\mbox{\ \ \ \ \ \ \ \ \ \ \ \ \ \ \ \ \ \ \ \ \ \ \ \ \ \ \ \ \ \ \ \ \ \ \ \ \ \ \ \ \ \ }{\tt RootOf(a1*\_Z$^\wedge$2+a123*a2*a3)}\ ,\\
namely the solution of $a_1z^2+a_{123}a_2a_3=0$, which is not a real number. Therefore, all other solutions are illegal (not in the (0,1] interval).
 \hfill$\Box$

\subsection{The Star Tree: Explicit Solutions}
In case of the star tree on $m=4$ leaves,  only the four edges with pendant leaves, $q_1, q_2, q_3, q_{123}$, 
could have non-zero lengths. The splits corresponding to internal edges are not present and  in this tree and this leads to the identities: 
$q_1, q_2, q_3=0$. Expressing these edges in terms of {\bf s}, slightly manipulating and simplifying the expressions, we get the following three quadratic  invariants:
\begin{align*}
(1)\   &\ (-1+2 s_1+2 s_2+2  s_3+2  s_{123})\\
&=(-1+2 s_1+2  s_{12}+2   s_{13}+2   s_{123})\cdot (-1+2  s_2+2   s_{12}+2  s_3+2   s_{13})\ .
\end{align*}
\begin{align*}
(2) \ &\ (-1+2 s_1+2 s_2+2  s_3+2  s_{123})\\
&=(-1+2 s_2+2 s_{12}+2 s_{23}+2 s_{123}) \cdot (-1+2 s_1+2 s_{12}+2 s_3+2 s_{23})\ .
\end{align*}
\begin{align*}
(3) \ &\ (-1+2 s_1+2 s_2+2  s_3+2  s_{123})\\
&=(-1+2 s_1+2 s_2+2 s_{13}+2 s_{23}) \cdot (-1+2 s_3+2 s_{13}+2 s_{23}+2 s_{123})\ .
\end{align*}

\noindent
Let $x=s_{12}+s_{13}$, $y=s_{12}+s_{23}$, and $z=s_{13}+s_{23}$. Note that
$x+y+z=2s_{12}+2s_{13}+2s_{23}$. The three equations  (1)--(3) become three quadratic equations,
each in one of these variables, which can therefore be solved separately.   Recall that $s_1, s_2, s_3, s_{123}$ are known.
\begin{align*}
(1)\   &\ (-1+2 s_1+2 s_2+2  s_3+2  s_{123})\\
&=(-1+2 s_1+2   s_{123}+2x)\cdot (-1+2  s_2+2  s_3+2x)\ .
\end{align*}
\begin{align*}
(2) \ &\ (-1+2 s_1+2 s_2+2  s_3+2  s_{123})\\
&=(-1+2 s_2+2 s_{123}+2y) \cdot (-1+2 s_1+2 s_3+2y)\ .
\end{align*}
\begin{align*}
(3) \ &\ (-1+2 s_1+2 s_2+2  s_3+2  s_{123})\\
&=(-1+2 s_1+2 s_2+z) \cdot (-1+2 s_3+2 s_{123}+z)\ .
\end{align*}
\subsection{The star tree and the Inverse Function Theorem}
Let $T_m$ denote the star tree on $m$ leaves.  If $p_i$ denotes the probability of a state change ($0$ to $1$ or visa versa) on the edge incident with leaf $i$, $i=1,2,3,4$, then the probability $s_i$ of the obtaining the tree split pattern that separates leaf $i$ from the other leaves (i.e. the probability that leaf $i$ has a different state to all the other leaves) is given, in terms of $p_1,p_2,p_3,p_4$ ($0 \leq p_i < 1/2$) by:
\begin{equation}
\label{sieq}
s_i = p_i \prod_{j \neq i}(1-p_j) + (1-p_i)\prod_{j \neq i}p_j.
\end{equation}
This formula is easily verified by observing that there are two way to obtain the tree split that separates leaf $i$ from the other leaves -- either there is a change on the edge incident with leaf $i$, and no changes
on any of the other edges (the first term), or there is no change on the leaf incident with leaf $i$ but there are changes on all the other edges (the second term).

In the special case with $m=4$, the  4-by-4  Jacobian  matrix $${\bf J} = \left[ \frac{\partial s_i}{\partial p_j}\right]_{1\leq i,j\leq 4}\ .$$
Suppose ${\bf J}$ is invertible (the determinant is non zero) in a certain domain, and the underlying functions
($s_1,s_2,s_3,s_4$ in our case) are differentiable. Then by the multivariate inverse function theorem, these functions are locally invertible. 

We used Maple to compute the Jacobian and its determinant. It turns out that the determinant can be expressed 
as a product of three simple multivariate polynomials:
\begin{align*}
\mbox{det}({\bf J})=&\left( 2\,{\it p_1}\,{\it p_2}-{\it p_1}+1-{
\it p_2}-{\it p_3}+2\,{\it p_3}\,{\it p_4}-{\it p_4} \right)\\
\cdot&\left( 2\,{\it p_1}\,{\it p_3}-{\it p_1}+1-{\it p_2}-{\it p_3}+2\,{\it p_2}
\,{\it p_4}-{\it p_4} \right)\\
\cdot &\left( 2\,{
\it p_1}\,{\it p_4}-{\it p_1}+1-{\it p_2}+2\,{\it p_2}\,{\it p_3}-{\it p_3}-{
\it p_4} \right) 
\end{align*}
Each factor has a similar structure, and can be represented as follows:
\begin{align*}
&\left( 2{\it p_1}{\it p_2}-{\it p_1}+1-{\it p_2}-{\it p_3}+2{\it p_3}{\it p_4}-{\it p_4} \right)\\
=&\frac{1}{2}(2p_1-1)\cdot(2p_2-1)+\frac{1}{2}(2p_3-1)\cdot(2p_4-1)\ .
\end{align*}
It is clear that if all $p_i$ are in the range $0 \leq p_i < 1/2$, then each factor is strictly positive, so the determinant
is positive, as desired.  So by the multivariate inverse function theorem, under these 
conditions, the functions are locally invertible for any neighborhood in the domain
$0 \leq p_1,p_2,p_3,p_4 < 1/2$.  By contrast,  the invertibility results in section 
4.4 {\em are} global. 

\section{Concluding Comments}

We have made the first steps towards  settling the question  of whether the probability distribution of tree splits suffices to determine the branch lengths in the 2-state symmetric model.  We have shown that this holds in two cases:
\begin{itemize}
\item  for any tree if the branch lengths are sufficiently short, and 
\item the branch lengths are strictly positive and the tree has at most four leaves.  
\end{itemize} 
We conjecture that invertibility holds for any phylogenetic tree (with any number of leaves) and across the space of all non-negative branch lengths, however a proof will require
a different or modified approach. The algebraic approach may be extended to slightly larger 
numbers of taxa, but the underlying complexity of solving such systems of polynomial 
equations is quite prohibitive as the number of taxa grows. For the approach employing the inverse function theorem to be applicable, it will be required to show global, rather than 
just local, invertibility.

Note that a related but simpler question  asks if any  $k$ linear combinations of the site pattern probabilities identifies the branch lengths.  In this case, the answer is `yes'  for any number of leaves (and any tree) and this can be seen by combining three observations that apply for a wide range of substitution models (not just on two states):  (i) the expected probability that
any pair of $i,j$ differ in state is a sum of certain pattern probabilities, (ii)  this expected probability can be transformed to give the sum of the branch lengths between $i$ and $j$ in the tree \cite{fels}, and (iii) $k$ such (carefully selected)  pairs of path distances suffice to recover the length of all the $k$ edges \cite{dre}.

Notice also that if instead of branch lengths we parameterize by the vector of probabilities of state changes on edges, ${\bf p} = [p_e]_{e \in E(T)}$ then for invertibility to hold in 
$[0, \eta']^{E(T)}$ it is necessary that $\eta' < \frac{1}{2}$. To see this, consider any phylogenetic tree for which one edge $e$ has $p_e=x \in (0,\frac{1}{2})$, while for all other edges $f \neq e$ we have
$p_f = \frac{1}{2}$.  Then for every edge of the  tree,  the tree split that corresponds to that edge has probability $2^{1-m}$ where $m$ is the number of leaves of the tree. Since this distribution is invariant to variation in $x$ we see that invertibility fails in this case. Thus the invertibility conjecture is more straightforward to state when phrased within the context of branch lengths than probabilities of state changes on edges. 

\section{Acknowledgments}
This work was initiated while B.C. was on a sabbatical visit at the University of Canterbury, New Zealand.
M.S. thanks the Allan Wilson Centre for Molecular Ecology and Evolution for funding support. 
The work of B.C. was not supported by the ISF (Israeli Science Foundation) or any other funding agency.

\section*{Appendix}
\subsection*{Details of the 5th Solution for the Resolved Tree:}$ $\\
We obtained
\begin{gather*}
a'_1={\tt RootOf} \left(  \left(  a_1 a_2- a_{12} a_{123}
 a_2- a_{12} a_2 a_3+ a_{123} a_3 \right)\_Z^2
+ a_1 a_{123} a_2 a_3\right. \\
\left. - a_1 
a_{12} a_{123}^2 a_3- a_1 a_{12} a_3^2
 a_{123}+ a_{123}^2 a_3^2 \right) \ , \\
a'_2=   -a_{123} a_3\, {\tt RootOf}^{-1}\left(  \left( - a_{12}
 a_{123} a_2- a_{12} a_2 a_3+ a_1 a_2
+ a_{123} a_3 \right)  \_Z^{2}\right. \\
\left. - a_1 a_{12}
 a_{123}^2 a_3- a_1 a_{12} a_{123} a_3^2
+ a_1 a_{123} a_2 a_3+ a_{123}^2 a_3^
2 \right)
\end{gather*}
where the root in both expressions correspond to the same value.
Multiplying the two solutions, we conclude that $a'_1a'_2=-a_{123}a_3$.

\subsection*{An alternative approach}
We present here an alternative strategy for deriving the quadratic invariants in the case of the resolved tree on
4 leaves. We use the five tree split probabilities to determine the probabilities of the remaining three patterns 
since, once we have all eight pattern probabilities,  the branch lengths can be easily recovered. Such a strategy looks plausible since we have three equations involving the 8 split probabilities, the linear invariant $\sum_\alpha s_\alpha=1$ together with two quadratic invariants of the form
$p_1({\bf s})=0$ and $p_2({\bf s})=0$, which we describe shortly.

We can proceed as follows.   Firstly, to find $s_\emptyset$  let $E$ be the event that leaf $x_1$ is in the same state as leaf 
$x_2$, and let $F$ be the event that leaf $x_3$ is in the same state as leaf $x_4$.
Under the symmetric 2-state model (or indeed any group-based model on any number of states \cite{sem}) $E$ and $F$ are 
independent, and so $$p_1({\bf s}) = \PP(E \& F) -\PP(E)\PP(F)=0.$$
Now, notice that $E$ and $F$ each occur for just four patterns, and these patterns are either tree splits or the pattern 
$\alpha = \emptyset$ (where all leaves are in the same state).  Similarly $E \&F$ occurs for just two patterns, namely 
the tree split corresponding to the central edge, and $\alpha = \emptyset$.  Thus the equation $p_1({\bf s}) = 0$ provides 
a quadratic equation for $s_\emptyset$ whose other terms involve just tree split probabilities.  Thus $s_\emptyset$ can then 
be expressed in terms of (just) the tree split probabilities [Minor technical point to check:  a quadratic has two solutions - 
check only one of them is viable for $0\leq p_e< 1/2$ -- e.g. $s_\emptyset$ must be greater than any tree split probability].

Specifically, we have: $$\PP(E\&F) = s_\emptyset + s_{12},$$
$$\PP(E) = s_{\emptyset} + s_{12} + s_{123} + s_3 \mbox { and }  \PP(F) = s_\emptyset+s_{12} + s_{1} + s_{2}.$$

Thus if we set $a=s_{123}+s_3$, $b= s_{1}+s_{2}$ (both of which are `known' since they are sums of probabilities of trees splits) then we obtain the quadratic equation for $\theta: = s_\emptyset + s_{12}$.

$$\theta = (\theta+a)(\theta+b).$$
Thus, $$s_\emptyset = -s_{12} + \frac{1}{2}\left (1-a-b \pm \sqrt{(1-a-b)^2 - 4ab}\right ).$$

Once we have $s_\emptyset$ we can calculate the probability of the sum $s'$ of the two non-tree splits by the linear invariant $\sum_\alpha s_\alpha=1$ to obtain:

$$s' = 1-S-s_\emptyset,$$
where $S = s_1+s_2+s_3+s_{123}+s_{12}$ is the sum of the five tree splits. 

Let $x=s_{13}$.  Then $s_{23} = s'-x$, and provided we can calculate $x$ (using the tree splits, and the new values we have derived from them, namely $s_\emptyset$ and $s'$) then we are done. 

To calculate $x$ we consider a second quadratic
invariant 
\begin{equation} 
\label{rreq}
p_2({\bf s})=(1-2p_{13})(1-2p_{24}) - (1-2p_{14})(1-2p_{23})=0,
\end{equation}

where $p_{ij}$ is the sum of the pattern probabilities for which leaf $x_i$ is in a different state to leaf $x_j$.

The justification of Eqn.~\ref{rreq} follows immediately from observing that $(1-2p_{ij}) = \exp(-q_{ij})$ where $q_{ij}$ is the sum of the (additive) edge lengths on the path connecting the leaves $i$ and $j$ in $T$.

Notice that $x$ contributes to each term in Eqn. (\ref{rreq})  in the first product for $p_2({\bf s})$ via $s_{23} = s'-x$ and it appears directly in both terms of the second product in $p_2({\bf s})$ and so
$p_2({\bf s})=0$ gives us a quadratic equation for $x$ in terms of quantities that are either known (the tree split probabilities) or have been determined ($s_\alpha$ and $s'$). 

Specifically, we have: 

$$p_{13}= s_1+s_3 + s_{12}+s_{23}; \mbox{ } p_{24} = s_2 + s_{12} + s_{23}+ s_{123};$$
$$p_{14}= s_{12}+ s_{13}+ s_{123}+ s_1; \mbox{ } p_{23} = s_2 + s_3 +s_{12}+ s_{13}.$$

So if we write Eqn. (\ref{rreq}) as:
$$p_{13}+p_{24}-p_{14}-p_{23} = 2(p_{13}p_{24} - p_{14}p_{23})$$
and since the left-hand side is $2(s_{23} - s_{13})$ we obtain:
$$s' - 2x = p_{13}p_{24} - p_{14}p_{23}.$$
where each term on the  right-hand side of this equation also involves $x$ linearly (and so we obtain a quadratic equation for $x$).


\begin{thebibliography}{10}
\bibitem{CHHP} Chor, B, Hendy, M.D., Holland, B.R., and Penny, D. (2000), 
Multiple maxima of likelihood in phylogenetic trees: an analytic approach. Molecular Biology and Evolution 17 (10), 1529-1541.
\bibitem{dre} Dress, A.W.M., Huber, K.T. and Steel, M. (2012). `Lassoing' a phylogenetic tree I: Basic properties, shellings, and covers. Journal of Mathematical Biology 65(1): 77--105.
\bibitem{fels} Felsenstein, J. (2004). Inferring phylogenies, Sinauer Associates, Sunderland, Mass. 
\bibitem{hen} Hendy, M.D. and Penny D. (1989).  A framework for the quantitative study of evolutionary trees.  Systematic Biology 38 (4), 297--309.
\bibitem{HP93} Hendy, M. D., and Penny, D. (1993). Spectral analysis of phylogenetic data. J. Classif. 10:5Ð24.
\bibitem{HPS94} Hendy, M. D., Penny, D., and Steel, M.A. (1994). Spectral analysis of phylogenetic data. J. Classif. 10:5Ð24.
\bibitem{mac77} MacWilliams, F. and Sloane, N. (1977). The theory of error-correcting codes. Amsterdam, North-Holland, Elsevier Science Publishers.
\bibitem{neyman1971} Neyman J. (1971). Molecular studies of evolution: a source of novel statistical problems. In: Gupta S.S, Yackel J, editors. Statistical decision theory and related topics. Academic Press; New York, NY. pp. 1--27.
\bibitem{sem} Semple, C. and Steel, M. (2003). Phylogenetics, Oxford Univ. Press.


\end{thebibliography}
\end{document}